\theoremstyle{plain}
\newtheorem{lemma}{Lemma}
\newcommand\numberthis{\addtocounter{equation}{1}\tag{\theequation}}
\def\Htran{\mbox{\tiny $\mathrm{H}$}}
\def\Ttran{\mbox{\tiny $\mathrm{T}$}}
\def\CN{\mathcal{N}_{\mathbb{C}}} 
\begin{document}

\title{Optimizing Reconfigurable Intelligent Surfaces for Small Data Packets: A Subarray Approach
\thanks{This work was supported by the FFL18-0277 grant from the Swedish Foundation for Strategic Research.}
}

\author{\IEEEauthorblockN{Anders Enqvist$^*$, \"Ozlem Tu\u{g}fe Demir$^*$, Cicek Cavdar$^*$, and Emil Bj\"ornson$^{*\ddagger}$}
	\IEEEauthorblockA{{$^*$Department of Computer Science, KTH Royal Institute of Technology, Kista, Sweden
		} \\ {$^\ddagger$Department of Electrical Engineering, Link\"oping University, Link\"oping, Sweden
		} \\
		{Email: \{enqv, ozlemtd, cavdar, emilbjo\}@kth.se}
	}
	
}

\maketitle

\begin{abstract}
In this paper, we examine the energy consumption of a user equipment (UE) when it transmits a finite-sized data packet. The receiving base station (BS) controls a reconfigurable intelligent surface (RIS) that can be utilized to improve the channel conditions, if additional pilot signals are transmitted to configure the RIS. 
We derive a formula for the energy consumption taking both the pilot and data transmission powers into account.
By dividing the RIS into subarrays consisting of multiple RIS elements using the same reflection coefficient, the pilot overhead can be tuned to minimize the energy consumption while maintaining parts of the aperture gain. 
Our analytical results show that there exists an energy-minimizing subarray size. For small data blocks and when the channel conditions between the BS and UE are favorable compared to the path to the RIS, the energy consumption is minimized using large subarrays. When the channel conditions to the RIS are better and the data blocks are large, it is preferable to use fewer elements per subarray and potentially configure the  elements individually.
\end{abstract}

\begin{IEEEkeywords}
Reconfigurable intelligent surface, energy efficiency, phase-shift optimization, subarrays, 6G.
\end{IEEEkeywords}

\section{Introduction}

Most wireless communication applications generate intermittent traffic \cite{asplund2020advanced}. When a user equipment (UE) should transmit a finite piece of data to a base station (BS), the packet size depends on which modulation-and-coding scheme is supported, which in turn is determined by the channel conditions. The channel used to be given by nature, but the advent of reconfigurable intelligent surfaces (RISs) makes it partially controllable \cite{8741198}. An RIS contains an array of sub-wavelength-sized elements that can be configured to reflect incident waves in desired ways \cite{8466374}, for example, as a beam towards the intended receiver. Recent works have described RIS experiments with over a thousand elements \cite{pei2021ris}. 
A practical obstacle is the additional pilot signaling overhead required to identify the preferred RIS configuration \cite{9311936}. This is particularly the case when only a finite-sized payload is to be transmitted.

The individual wireless links might not be capacity-constrained in 6G, thanks to enormous bandwidths, but rather limited by energy consumption. A large RIS can improve the energy-efficiency (EE) of the transmission, and beats traditional relays when it is sufficiently large \cite{8741198, 8888223,9497709}. These prior works use the capacity-to-power ratio as the EE metric, targeting continuous transmission with full buffers and deterministic channels.
The RIS-aided ergodic capacity of fading channels was studied in \cite{9244106,9333612}, where it was shown that particular numbers of RIS elements maximize the capacity or EE metric, respectively. A key assumption was that the pilot length equals the number of elements (plus one due to the BS-UE direct channel), thus the channel coherence limits how many RIS elements are useful.

In this paper, we take a new approach by considering the uplink transmission of a finite-sized data payload, as in practice \cite{asplund2020advanced}. 
The maximum EE is achieved by minimizing the energy consumption at the UE for a certain signal-to-noise ratio (SNR), with support by an RIS with a fixed number of elements, under the constraint that the receiver can decode the data.
 Our model groups several RIS elements into \emph{subarrays} that share the same reflection coefficient as in \cite{zheng2019intelligent}. However, in this paper its impact on the EE is analyzed. We show that the pilot length can be made equal to the number of subarrays (plus one) and selected to minimize energy consumption. We show that this solution is preferable to turning off a fraction of the RIS elements, as done in \cite{9244106,9333612}. We derive the optimal number of subarrays and show how it depends on the pathlosses, number of RIS elements, and payload size.

\section{System model}

We consider a system where a single-antenna UE transmits to a single-antenna BS. In the vicinity of the BS and UE, there is an RIS that can be utilized to improve the channel conditions. We assume that the BS configures and interacts with the RIS.
The question that arises is: when is it worth spending extra radio resources configuring the RIS, to minimize the total energy consumption of the UE for uplink transmission?

\begin{figure}[htbp] 
\centerline{\includegraphics[width=1\columnwidth]{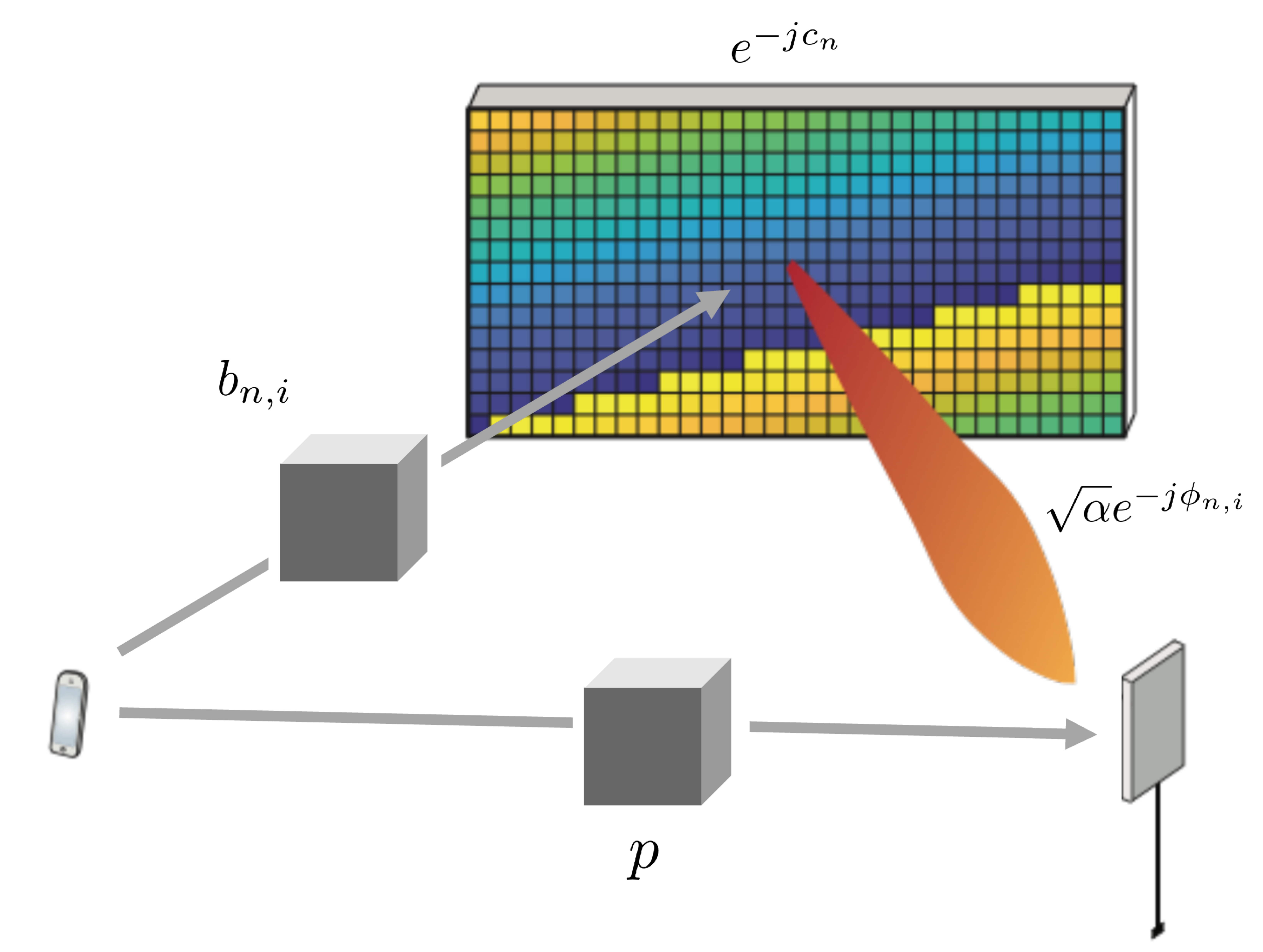}} \vspace{-3mm}
\caption{We consider a setup where a UE transmits to a BS through a direct path $p$ and a controllable RIS with $M$ elements. Two of the paths are in non-line-of-sight (NLoS) and one in LoS. The colors represent the phase-shifts of the RIS elements.}
\label{fig:system-model} \vspace{-2mm}
\end{figure}

\begin{figure}[t!]
	\centering 
	\begin{overpic}[width=.9\columnwidth,tics=10]{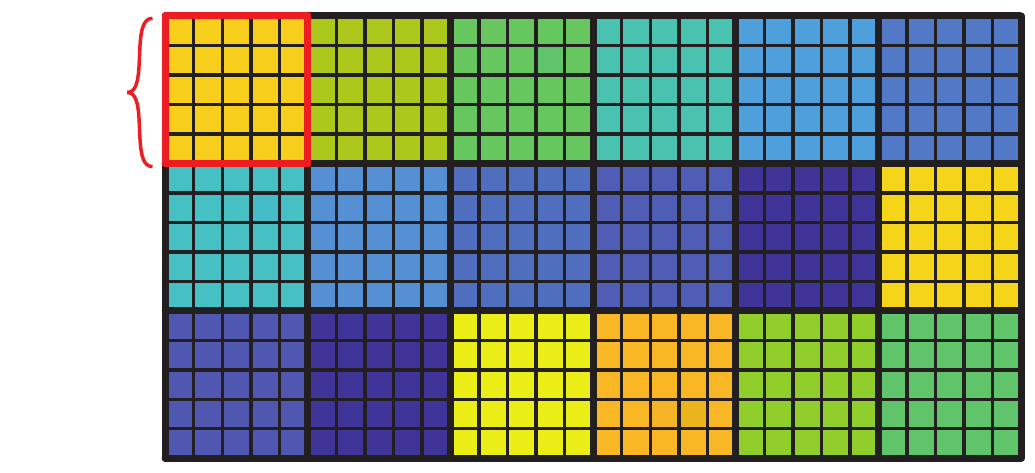}
		\put(-2,40.5){\footnotesize Subarray}
		\put(-2,37){\footnotesize  with}
		\put(-2,33.5){\footnotesize $M/N$}
		\put(-2,30){\footnotesize  elements}
\end{overpic}  \vspace{-2mm}
	\caption{An example setup with an RIS consisting of $M=450$ elements that are divided into $N=18$ subarrays, each with $M/N=25$ elements.}  \vspace{-5mm}
\label{fig:system-model2b}
\end{figure}

We model the problem as follows. The UE transmits the data signal $x$ with power $P_\mathrm{data}$, i.e., $\mathbb{E}\{\vert x \vert^2\}=P_\mathrm{data}$. The received signal at the BS is
\begin{equation} \label{eq:received-signal}
    y=gx+w,
\end{equation}
where $w\sim \CN (0,\sigma^2)$ is the complex Gaussian thermal receiver noise  and $g$ is the end-to-end channel response, which depends on the RIS configuration and is modeled as follows.

We consider the environment in Fig.~\ref{fig:system-model}, where the BS and RIS are  deployed to have a line-of-sight (LoS) channel between them, while all other channels feature Rayleigh fading.  
The direct path between the BS and UE is modeled as $p\sim \CN(0,\rho)$, where $\rho$ is the average channel gain.
There are also $M$ reflected paths via the $M$ RIS elements.
We assume the RIS elements are divided into $N$ subarrays, each consisting of $M/N$ elements, as illustrated in Fig.~\ref{fig:system-model2b}. 
The RIS assigns the same phase-shift $c_n$ to all the elements in subarray $n$, to limit the channel estimation overhead and reduce computational complexity.
We will treat $N$ as an optimization variable in this paper and stress that $N=M$ results in a conventional RIS with a different phase-shift at each RIS element. Determining the optimal number of elements $M$ is important from a design perspective however, once the RIS has been manufactured, it will not change. On a similar note, once the RIS and the BS have been deployed, the pathloss $\alpha$ between them will be a constant factor.
Since we have LoS paths from the RIS to the BS, all elements have a common gain of $\alpha$ but different phase-shifts. For the RIS element $i\in \{1,\ldots,M/N\}$ in subarray $n$, we let $\phi_{n,i}$ denote the phase-shift of the corresponding path, thus the channel coefficient is $\sqrt{\alpha} e^{-j\phi_{n,i}}$.
The channel from the UE to the RIS element $i$ in subarray $n$ is denoted and modeled as $b_{n,i} \sim \CN(0,\beta)$, where $\beta$ is the average channel gain.


In summary, the end-to-end channel response in \eqref{eq:received-signal} is
\begin{equation} \label{eq:end-to-end-channel}
g=
        p+
    \sum_{n=1}^{N} e^{-j c_n}
    \underbrace{\sum_{i=1}^{M/N} \sqrt{\alpha}e^{-j \phi_{n,i}} 
    b_{n,i}}_{=Z_n},
\end{equation}
where we define $Z_n=\sum_{i=1}^{M/N} \sqrt{\alpha}e^{-j \phi_{n,i}}b_{n,i}$ as the concatenated channel of subarray $n$.
We note that $Z_n \sim \CN(0,\frac{\alpha \beta M}{N})$ and thus $|Z_n|\sim \mathrm{Rayleigh}(\sqrt{
\frac{\alpha \beta M}{2N}})$ is Rayleigh distributed.
The term $e^{-j c_n}$ is the common phase-shift of the elements in subarray $n$.\footnote{We have not explicitly included the absorption losses in the RIS elements since these can be absorbed into the channel gains.}
By using \eqref{eq:end-to-end-channel}, we can rewrite \eqref{eq:received-signal} as
\begin{equation} \label{eq:received-signal2}
y=\left(p+\sum_{n=1}^{N} Z_n e^{-j c_n} \right)x+w.
\end{equation}


\subsection{Signal-to-noise ratio in the data transmission}

When the channel in \eqref{eq:received-signal2} is used for data transmission, the $\mathrm{SNR}$ for the configuration $c_1,\ldots,c_N$ over $N$ subarrays is 
\newline
\begin{equation} \label{eq:SNR}
\mathrm{SNR}(c_1,\ldots,c_N) = \ \frac{P_\mathrm{data}}{\sigma^2}\left\vert
p+ \sum_{n=1}^{N} Z_n  e^{-j c_n} \right\vert^2
\end{equation}
\newline
and is maximized by $c_n=\arg(Z_n)-\arg(p)$, which gives
\begin{equation} \label{eq:max-SNR}
    \overline{\mathrm{SNR}} = \max_{c_1,\ldots,c_N} \mathrm{SNR}(c_1,\ldots,c_N) = \!
\frac{P_\mathrm{data}}{\sigma^2} \! \left(|p|+ \sum_{n=1}^{N} |Z_n|\right)^2.
\end{equation}

When transmitting a finite-sized data packet, the average $\mathrm{SNR}$ determines the likelihood of successful reception (along with the modulation and coding scheme).
Before we derive the exact expression for the average $\mathrm{SNR}$, we will consider a tight lower bound on it, which will be useful when extracting analytical insights.
A lower bound $\overline{\mathrm{SNR}}_{\rm av,low}$ on the average $\mathrm{SNR}$ can be obtained using Jensen's inequality as
\begin{align}
\label{eq:Jensen}
    \mathbb{E} \left\{\overline{\mathrm{SNR}}\right\} & \geq \overline{\mathrm{SNR}}_{\rm av,low} = \frac{P_\mathrm{data}}{\sigma^2} \left(\mathbb{E}\left\{|p|+ \sum_{n=1}^{N} |Z_n|\right\}\right)^2 \nonumber \\
    &= \frac{P_\mathrm{data}}{\sigma^2} \left(\frac{\sqrt{\pi}}{2}\sqrt{\rho}+  N  \frac{\sqrt{\pi}}{2}\sqrt{
\frac{\alpha \beta M}{N}}  \right)^2 \nonumber \\
&=\frac{\pi P_\mathrm{data}}{4\sigma^2} 
\left(\sqrt{\rho}+\sqrt{\alpha \beta M N}  \right)^2
\end{align}
where the equality in the second line follows from the fact that 
$E\{|p|\}=\frac{\sqrt{\pi}}{2}\sqrt{\rho}$ and $E\left\{|Z_n|\right\} = \frac{\sqrt{\pi}}{2}\sqrt{
\frac{\alpha \beta M}{N}}$.

The following lemma gives the exact average $\mathrm{SNR}$.

\begin{lemma}
With an optimized RIS configuration, the average of the $\mathrm{SNR}$ in \eqref{eq:max-SNR} is
\begin{equation} \label{eq:average-SNR}
\mathbb{E}\left\{\overline{\mathrm{SNR}}\right\} = \overline{\mathrm{SNR}}_{\rm av,low} + \frac{ P_\mathrm{data}}{ \sigma^2}\left(1-\frac{\pi}{4}\right)\left(\rho+\alpha\beta M\right) .
\end{equation}
\end{lemma}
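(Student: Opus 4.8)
The plan is to evaluate $\mathbb{E}\{\overline{\mathrm{SNR}}\}$ directly from \eqref{eq:max-SNR}. Write $\overline{\mathrm{SNR}} = \frac{P_\mathrm{data}}{\sigma^2}X^2$ with the nonnegative random variable $X = |p| + \sum_{n=1}^{N}|Z_n|$, and use the elementary identity $\mathbb{E}\{X^2\} = (\mathbb{E}\{X\})^2 + \mathrm{Var}(X)$. The squared-mean term is exactly the quantity already computed on the way to \eqref{eq:Jensen}, so $\frac{P_\mathrm{data}}{\sigma^2}(\mathbb{E}\{X\})^2 = \overline{\mathrm{SNR}}_{\rm av,low}$ by definition. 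It therefore only remains to show that $\frac{P_\mathrm{data}}{\sigma^2}\mathrm{Var}(X)$ equals the claimed correction term $\frac{P_\mathrm{data}}{\sigma^2}\left(1-\frac{\pi}{4}\right)\left(\rho+\alpha\beta M\right)$.

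For the variance, the key observation is that $p, Z_1,\dots,Z_N$ are mutually independent: $p$ and the channels $b_{n,i}$ are independent zero-mean circularly symmetric complex Gaussians, the LoS phases $\phi_{n,i}$ are deterministic, and distinct $Z_n$ are built from disjoint sets of the $b_{n,i}$. Hence $|p|, |Z_1|,\dots,|Z_N|$ are independent and $\mathrm{Var}(X) = \mathrm{Var}(|p|) + \sum_{n=1}^{N}\mathrm{Var}(|Z_n|)$. Now I invoke the standard moments of the modulus of a $\CN(0,\Omega)$ variable, namely $\mathbb{E}\{|\cdot|\} = \frac{\sqrt{\pi}}{2}\sqrt{\Omega}$ and $\mathbb{E}\{|\cdot|^2\} = \Omega$, which give $\mathrm{Var}(|\cdot|) = \left(1-\frac{\pi}{4}\right)\Omega$. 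Applying this with $\Omega=\rho$ for $|p|$ and $\Omega=\frac{\alpha\beta M}{N}$ for each $|Z_n|$ (both facts are already quoted in the excerpt) yields $\mathrm{Var}(X) = \left(1-\frac{\pi}{4}\right)\rho + N\left(1-\frac{\pi}{4}\right)\frac{\alpha\beta M}{N} = \left(1-\frac{\pi}{4}\right)\left(\rho+\alpha\beta M\right)$, and multiplying by $P_\mathrm{data}/\sigma^2$ completes the argument.

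I do not expect a genuine obstacle here; the only point requiring care is justifying the independence that makes the cross-terms factor into products of first moments — this is exactly what turns the otherwise-messy expansion of $X^2$ into the clean split ``Jensen bound plus sum of variances.'' An equivalent, slightly more pedestrian route, should one prefer to avoid the variance identity, is to expand $\left(|p|+\sum_n|Z_n|\right)^2 = |p|^2 + 2|p|\sum_n|Z_n| + \sum_n|Z_n|^2 + \sum_{n\neq m}|Z_n||Z_m|$, take expectations term by term using independence for the products, and then subtract the expanded form of $\overline{\mathrm{SNR}}_{\rm av,low}$ from \eqref{eq:Jensen}; the surviving terms are precisely $\left(1-\frac{\pi}{4}\right)\rho + \left(1-\frac{\pi}{4}\right)\alpha\beta M$, matching \eqref{eq:average-SNR}.
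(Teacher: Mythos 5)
Your proof is correct and is essentially the paper's argument in a slightly repackaged form: the paper expands $\left(|p|+\sum_{n}|Z_n|\right)^2$ term by term and uses the same independence and Rayleigh moments, which is exactly the ``pedestrian route'' you describe, while your identity $\mathbb{E}\{X^2\}=(\mathbb{E}\{X\})^2+\mathrm{Var}(X)$ just collects those terms into the Jensen bound plus a sum of variances. No gap; the variance-decomposition phrasing is arguably the cleaner way to see why the correction term is independent of $N$.
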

\begin{IEEEproof}
By expanding the square in \eqref{eq:max-SNR}, we obtain 
\begin{align}
\mathbb{E} \left\{\overline{\mathrm{SNR}}\right\}=& 
\frac{P_\mathrm{data}}{\sigma^2}\Bigg(\mathbb{E}\left\{|p|^2\right\}+2\mathbb{E}\left\{|p|\right\}\sum_{n=1}^{N} \mathbb{E}\left\{|Z_n|\right\} \nonumber \\
&+\sum_{n=1}^{N} \sum_{m=1}^{N}\mathbb{E}\left\{|Z_n||Z_m|\right\}\Bigg).  
\end{align}
By using the independence of $Z_n$ from $p$ and $Z_m$, for $n\neq m$, $\mathbb{E} \left\{\overline{\mathrm{SNR}}\right\}$ can be simplified as
\begin{align}
&\frac{P_\mathrm{data}}{\sigma^2} \Bigg(\rho+\frac{\pi}{2}\sqrt{\rho\alpha\beta M N} + N \frac{\alpha \beta M}{N}
 \nonumber \\
&+N(N-1)\left( \frac{\sqrt{\pi}}{2} \sqrt{\frac{\alpha\beta M}{N}}\right)^2 \Bigg) \nonumber \\
&= \frac{\pi P_\mathrm{data}}{4 \sigma^2} \Bigg(
\left(\sqrt{\rho}+\sqrt{\alpha \beta M N}\right)^2
+\left(\frac{4}{\pi}-1\right) \!(\rho+\alpha \beta M) \!
\Bigg).    
\numberthis
\label{eq:exactSNR}
\end{align}
We finally obtain \eqref{eq:average-SNR} by identifying $\overline{\mathrm{SNR}}_{\rm av,low}$ from \eqref{eq:Jensen}.
\end{IEEEproof}

The exact average $\mathrm{SNR}$ in \eqref{eq:average-SNR} is an increasing function of the channel gains $\rho$, $\alpha$, $\beta$, the number of RIS elements $M$, and the number of subarrays $N$.
We further notice that \eqref{eq:average-SNR} is a summation of its lower bound in \eqref{eq:Jensen} and a term that is independent of $N$. 
The latter term is often negligible.
%
To demonstrate this, Fig.~\ref{fig:ESNR} compares the exact $\mathrm{SNR}$ expression in \eqref{eq:average-SNR} and its lower bound in (\ref{eq:Jensen}). We consider a setup where the channel gains are
$\alpha=-80$\,dB, $\beta=-60$\,dB, and $\rho=-95$\,dB. The RIS consists of $M=1024$ elements, which can be divided into subarrays of sizes $N=2^K$ for $K=0,1, \dotsc ,10$.
The ``transmit SNR'' is $P_\mathrm{data}/\sigma^2=104$\,dB. Also shown is a comparison to not using subarrays but instead $N$ individually configured elements. In this baseline case, where $M-N$ elements are turned off as in prior work, we obtain the average maximum SNR from \eqref{eq:exactSNR} as
\begin{align}
&\mathbb{E}\left\{\overline{\mathrm{SNR}_b}\right\}= \nonumber \\
&\hspace{2mm}\frac{\pi P_\mathrm{data}}{4 \sigma^2} \Bigg(
\left(\sqrt{\rho}+N\sqrt{\alpha \beta}\right)^2
+\left(\frac{4}{\pi}-1\right) \!(\rho+\alpha \beta N) \!
\Bigg).
\end{align}
It is clearly seen that
$\mathbb{E}\left\{\overline{\mathrm{SNR_b}}\right\} \leq \mathbb{E}\left\{\overline{\mathrm{SNR}}\right\}$
with equality only if $N=M$. This is also observed in Fig.~\ref{fig:ESNR} which shows how the subarrays maintain the aperture gain determined by the RIS size, even if the beamforming gain is reduced by sharing configurations between elements. The stars ($*$), which overlap with the formulas in Fig.~\ref{fig:ESNR}, are obtained from 15,000 Monte Carlo realizations and validate the accuracy of the results.
\begin{figure}[t!]
\includegraphics[width=1\columnwidth]{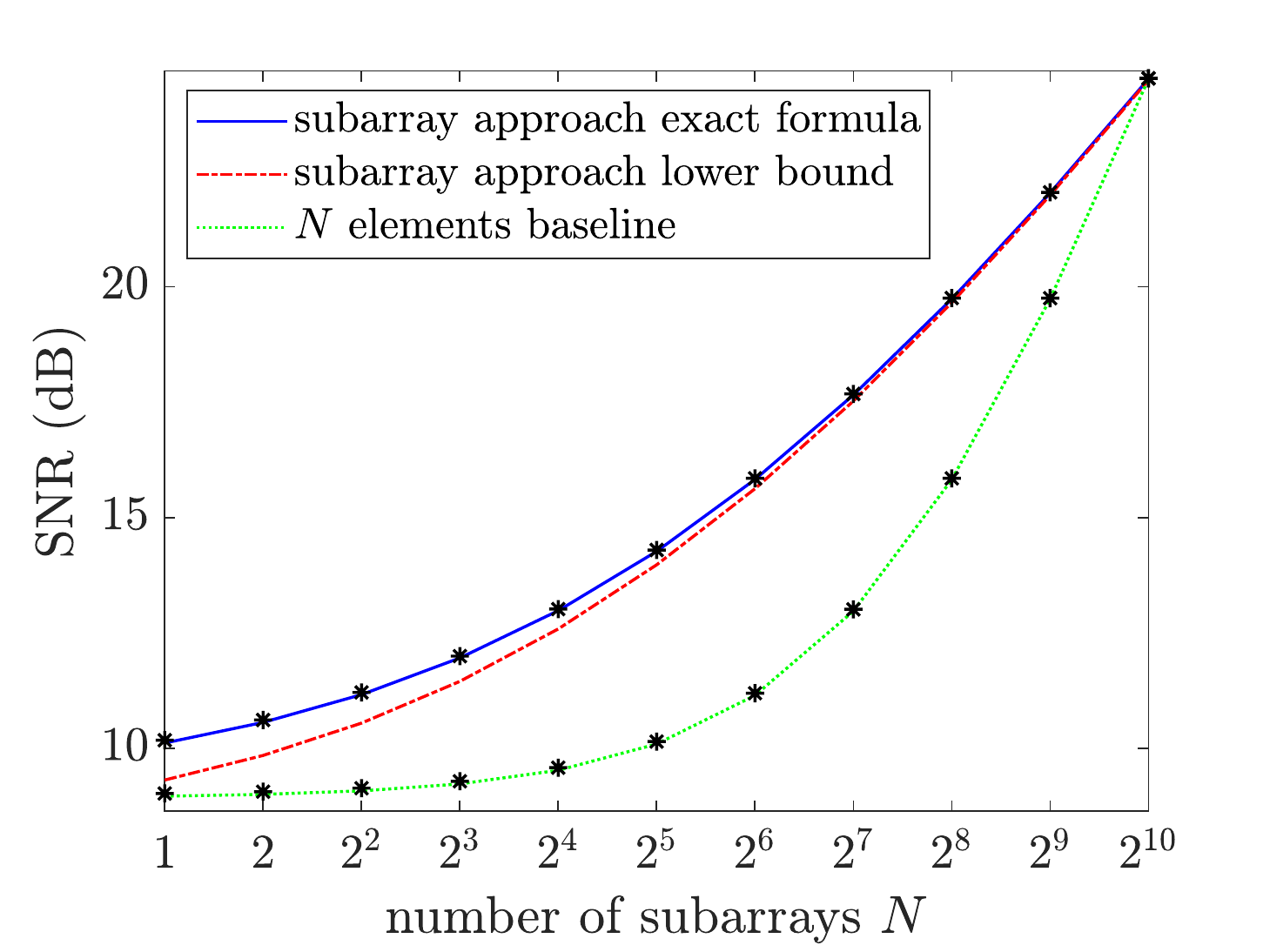} 
\caption{$\mathrm{SNR}$ versus $N$ for $\alpha=-80$\,dB, $\beta=-60$\,dB, and $\rho=-95$\,dB for $M=1024$ RIS elements being divided into subarray sizes of $N=2^K$ for $K=0,1, \dotsc ,10$. The baseline with only $N$ individually controlled elements is also shown.
} \label{fig:ESNR}
\end{figure}

\subsection{Signal-to-noise ratio in the pilot transmission}


To compute the RIS configuration that maximizes the $\mathrm{SNR}$ in \eqref{eq:max-SNR}, we need to learn the current channel realization. 
When using $N$ subarrays, there are effectively $N+1$ channel coefficients in  \eqref{eq:received-signal2} to estimate; one for each subarray with RIS elements plus the direct path. Let 
\begin{align} \label{eq:channel-h}
\textbf{h} =    \begin{bmatrix} p & Z_1 & \hdots & Z_{N} \end{bmatrix}^{\Ttran} \in \mathbb{C}^{N+1}
\end{align}
denote the vector containing these channel coefficients from \eqref{eq:received-signal2}.
The benefit of using subarrays is apparent in that the number of coefficients is lower than the total number of RIS elements.
Since there are $N+1$ unknown coefficients, we assume the UE sends a predefined pilot sequence of length $N+1$ in order to obtain sufficient degrees-of-freedom to estimate the individual coefficients  in $\textbf{h}$. 
Without loss of generality, the pilot sequence is constant $\sqrt{P_\mathrm{pilot}}[1 \ \ldots \ 1 ]^{\Ttran}\in \mathbb{C}^{N+1}$, where $P_\mathrm{pilot}$ is the transmit power.
During the pilot transmission, the RIS changes its configuration according to a structured deterministic phase-shift pattern. Let $\bm{\psi}_i\in \mathbb{C}^{N+1}$ be the vector with $1$ as its first entry (representing the direct path) and the configured phase-shifts $e^{-j\psi_{i,n}}$ for subarrays $n=1,\ldots,N$ during the $i$th pilot  transmission as the other entries. This vector is given as
\begin{align}
   \bm{\psi}_i = \begin{bmatrix} 1 & e^{-j\psi_{i,1}} & \hdots & e^{-j\psi_{i,N}} \end{bmatrix}^{\Ttran}.
\end{align}
 Then, the received signal at the BS is given in vector form as
\begin{align}
    \textbf{y}_\mathrm{pilot} = \sqrt{P_\mathrm{pilot}}\underbrace{\begin{bmatrix} \bm{\psi}_1^{\Ttran} \\ \vdots \\ \bm{\psi}_{N+1}^{\Ttran}  \end{bmatrix}}_{\triangleq \boldsymbol{\Psi}}\textbf{h}+ \textbf{w}_\mathrm{pilot}
\end{align}
where $\textbf{w}_\mathrm{pilot}\sim \CN(\textbf{0},\sigma^2\textbf{I}_{N+1})$ is the thermal receiver noise. If we select the matrix $\boldsymbol{\Psi}$ as any scaled unitary matrix (e.g., a discrete Fourier transform matrix), then $\boldsymbol{\Psi}^{\Htran}\boldsymbol{\Psi}=(N+1)\textbf{I}_{N+1}$ and the BS can estimate the individual paths as
\begin{align} 
\label{eq:unitarypilot}
 \frac{\boldsymbol{\Psi}^{\Htran}\textbf{y}_\mathrm{pilot}}{\sqrt{N+1}}  = \sqrt{(N+1)P_\mathrm{pilot}}\textbf{h} +  \frac{\boldsymbol{\Psi}^{\Htran}\textbf{w}_\mathrm{pilot}}{\sqrt{N+1}}   
\end{align}
where the effective noise component has the same distribution as $\textbf{w}_\mathrm{pilot}$ since $\boldsymbol{\Psi}^{\Htran}/\sqrt{N+1}$ is a unitary matrix. Recalling the entries of the channel vector $\textbf{h}$ from \eqref{eq:channel-h}, we have different pilot $\mathrm{SNR}$ for the BS-UE direct path and the paths through the $N$ subarrays of RIS. Using the channel statistics derived before, the pilot $\mathrm{SNR}$ for the direct channel and the cascaded RIS channels are given as 
\begin{align} \label{eq:SNR_pilot}
\mathrm{SNR}_\mathrm{pilot} = \begin{cases} \frac{(N+1)P_\mathrm{pilot}\rho}{\sigma^2} & \text{for the direct channel}, \\
\frac{(N+1)P_\mathrm{pilot}\alpha\beta M}{N\sigma^2} & \text{for the RIS channels}.\end{cases}
\end{align}
In configuring the RIS phase-shifts, if the direct channel gain $\rho$ is larger than each cascaded RIS subsurface channel gain $\alpha\beta M/N$, a value of $P_\mathrm{pilot}$ that is selected according to the minimum pilot $\mathrm{SNR}$ requirement of the RIS channels will ensure that the direct channel pilot $\mathrm{SNR}$ is also above the required threshold. When the direct channel gain is relatively low, aligning the RIS phase-shifts according to $p$ will not have a significant effect on the maximum data $\mathrm{SNR}$ in \eqref{eq:max-SNR}. Again in this case, we should focus on the pilot $\mathrm{SNR}$ of the RIS channels in selecting the pilot transmit power $P_\mathrm{pilot}$ to guarantee accurate channel estimates. Motivated by this, in our design problem, we will focus on the pilot $\mathrm{SNR}$ of the RIS channels that is given in the second line of \eqref{eq:SNR_pilot}.


\subsection{Required pilot $\mathrm{SNR}$ to achieve maximal data $\mathrm{SNR}$}
\label{subsec:required-pilot-SNR}

When transmitting the pilot signals, they will be received with an $\mathrm{SNR}$ value of $\gamma_p$ ($\mathrm{SNR}_\mathrm{pilot}$ in \eqref{eq:SNR_pilot}) depending on the allocated power. If the value of $\gamma_p$ is not high enough, it will lead to the BS not being able to adequately compute the optimal phase-shifts needed to reach the maximum data $\mathrm{SNR}$ in (\ref{eq:max-SNR}). The performance loss is defined as the ratio between the achieved $\mathrm{SNR}$ in \eqref{eq:SNR} with the phase-shifts that are adjusted according to the channel estimates and the maximum in (\ref{eq:max-SNR}).

We used (\ref{eq:unitarypilot}) when calculating $\frac{(4)}{(5)}$ for different $\gamma_p$ values and noticed that
as $N$ increases the situation becomes worse, therefore we consider $N=M$.
The power that is allocated to the pilot symbols increases the achievable data $\mathrm{SNR}$ albeit at an additional energy consumption. 
However, achieving the target $\mathrm{SNR}$ in the pilot transmission is only possible if the channel gains are known beforehand. Since the purpose of the pilot is to learn the channel it will \emph{not} be treated as an optimization variable. In Fig.~\ref{fig:csi} we have simulated the fraction of maximal achievable data $\mathrm{SNR}$ $\gamma_d$ for $M=1024$ elements and varying pathloss values. If the direct path is weak, in the sense that $\rho/(\alpha \beta)$ is 100 or 0, then we need a pilot $\mathrm{SNR}$ of 20\,dB to get close to the perfect channel state information (CSI) case. We will use this value in later simulations. In the remainder we assume the large-scale fading parameters $\alpha, \beta ,\rho$ are known beforehand, so we can select the target value $\gamma_p$ to achieve a negligible performance loss in the data transmission phase.

\begin{figure}[t!]
\includegraphics[width=1\columnwidth]{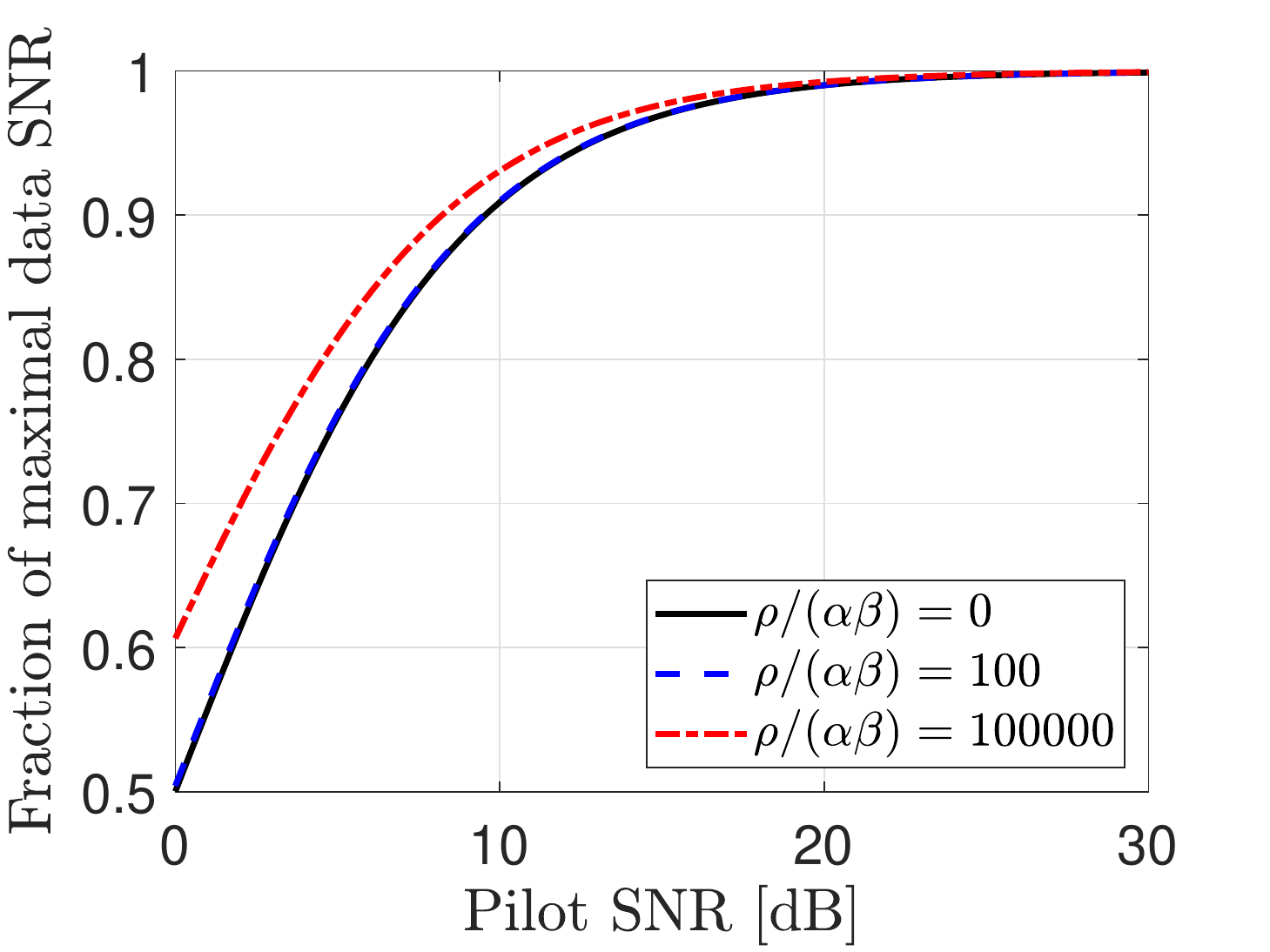}
\caption{Relative performance loss in data $\mathrm{SNR}$ $\gamma_d$ for $N=1024$ individually configured elements as a function of pilot $\mathrm{SNR}$ $\gamma_p$.} \label{fig:csi}
\end{figure}


\section{Optimizing the Subarray Size}

We consider the scenario where the UE wants to transmit a data packet with a finite-sized information content. 
This payload data is mapped to $L$ complex modulation symbols, based on some joint coding and modulation scheme designed to work at an $\mathrm{SNR}$ of $\gamma_d$. The value of $L$ greatly depends on how many bits the BS has scheduled for the UE together with the modulation and coding scheme. 
As a reference, in 5G OFDM-based systems, if a UE is scheduled on a single resource block during one time slot it would transmit $12 \cdot 14 = 168$ symbols. If the UE is scheduled on the whole carrier, consisting of 273 resource blocks in the mid-band, during 3 time slots it would transmit $3 \cdot 12 \cdot 273 \cdot 14 = 137,592$ symbols. Depending on the channel coherence time, the pilot measurement may be valid for multiple time slots \cite{dahlman20205g}. This leads to the possibility of $L$ attaining values in the order of $L=200$ to $L=150,000$. We will consider both cases herein.

We want to minimize the energy consumption at the UE while ensuring successful communication.
The energy consumption $E(N)$ of transmitting the $N+1$ pilots during channel estimation (with $N$ subarrays) and the $L$ data symbols is
\begin{equation} \label{eq:energy-consumption1}
    E(N)= (N+1)\frac{P_\mathrm{pilot}}{B}+L \frac{P_\mathrm{data}}{B},
\end{equation}
where $B$ is the symbol rate.

Assuming that a pilot $\mathrm{SNR}$ of $\gamma_p$ is needed for the channel estimation (see Section~\ref{subsec:required-pilot-SNR}), we obtain from the second line of \eqref{eq:SNR_pilot} and \eqref{eq:average-SNR} the required transmission powers as
\begin{align}
 P_\mathrm{pilot}=&\frac{\sigma^2 \gamma_p}{\alpha \beta M}\frac{N}{N+1}, \\
P_\mathrm{data}= &\frac{\sigma^2 \gamma_d}{\frac{\pi}{4} \left(\sqrt{\rho}+ \sqrt{\alpha \beta M N}\right)^2+\left(1-\frac{\pi}{4}\right)(\rho+\alpha \beta M)
}.
\end{align}

Inserting the above equations into \eqref{eq:energy-consumption1}, the energy consumption can be written as
\begin{align*}
\label{eq:exact}
E(N) &=\frac{L}{B} \frac{\sigma^2 \gamma_d}{\frac{\pi}{4}\left(\sqrt{\rho}+ \sqrt{\alpha \beta M N}\right)^2
+\left(1-\frac{\pi}{4}\right)(\rho+\alpha \beta M)}\\
&+\frac{N}{B} \frac{\sigma^2 \gamma_p}{\alpha \beta M}.\numberthis{}
\end{align*}
%
This equation shows that the energy consumption reduces as the number of elements $M$ in the RIS increases for a fixed subarray number $N$. Having more elements therefore gives both increased $\mathrm{SNR}$ and reduced energy consumption.


A subarray size that strikes an optimal balance between the $\mathrm{SNR}$ and the the pilot overhead may be found. 
To determine the subarray size $N$ that minimizes the energy consumption we may solve the following optimization problem:


\begin{equation} \label{eq:optimization}
\begin{aligned}
 \underset{N}{\textrm{minimize}} \quad & E(N)\\
\textrm{subject to} \quad & N\in \{1,\ldots,M\}, \quad \frac{M}{N} \in \mathbb{Z}  .
\end{aligned}
\end{equation}
The optimization is based on the long-term information ($\alpha, \beta, \rho$) since the exact channel realizations are unknown until after the pilots have been transmitted.

The first and second order derivatives of the energy consumption when inserting $\overline{\mathrm{SNR}}_{\rm av,low}$ from (\ref{eq:Jensen}) in the energy consumption \eqref{eq:exact} are: \vspace{-3mm}

\begin{equation} \label{eq:first-derivative}
    E'(N)=\frac{\sigma^2 \gamma_p}{B\alpha \beta M} - \frac{\sigma^2 \gamma_d L}{B}
    \frac{\sqrt{\alpha \beta M}}{\frac{\pi}{4}\left(\sqrt{\rho}+\sqrt{\alpha \beta M N} \right)^3\sqrt{N}},
\end{equation}

\begin{equation}
    E''(N)=\frac{\sigma^2 \gamma_d L}{B}\frac{\sqrt{\alpha \beta M}(\sqrt{\rho}+4\sqrt{\alpha \beta M N})}{\frac{\pi}{2} (\sqrt{\rho}+\sqrt{\alpha \beta M N})^4N^{3/2}}.
\end{equation}

It can be verified that $E''(N)>0$ for every $N>0$.  This can also be shown for the exact expression. 
Hence, a unique positive real-valued solution $N^{\star}$ to the equation $E'(N^{\star})=0$ which minimizes $E(N)$ exists  as we show in Lemma~\ref{lem2} below.
It is only when $1\leq N^{\star}\leq M$, however, that the solution is viable. When all subarrays consist of the same number of elements then $M/N$ must also be an integer. The following lemma formalizes this result and proposes a way to find the optimal integer $N$ to the problem in \eqref{eq:optimization}.


\begin{lemma} \label{lem2}
The optimal energy-efficient solution $N^{\star}$ that minimizes $E(N)$ exists and is obtained by finding the unique positive real-valued root of $E^{\prime}(N)=0$ via a bisection search. When $N$ is restricted to be a positive integer such that also $M/N$ is an integer, the optimal $N$ can be found by evaluating $E(N)$ for the closest smaller and larger feasible $N$ values to $N^{\star}$ and selecting the one giving the minimum $E(N)$.
 \end{lemma}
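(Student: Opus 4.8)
The plan is to treat $E(N)$ as a function of a real variable $N>0$, show it is strictly convex with a derivative that changes sign exactly once, and then descend to the integer-constrained problem in \eqref{eq:optimization}.

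First I would pin down the behavior of $E'(N)$ in \eqref{eq:first-derivative} at the two ends of $(0,\infty)$. As $N\to 0^+$, the first term is the fixed positive constant $\sigma^2\gamma_p/(B\alpha\beta M)$, whereas the (negative) second term blows up like a constant times $N^{-1/2}(\sqrt{\rho}+\sqrt{\alpha\beta M N})^{-3}$, so $E'(N)\to-\infty$. As $N\to\infty$, the denominator of the second term grows like $(\alpha\beta M)^{3/2}N^{2}$, so that term vanishes and $E'(N)\to\sigma^2\gamma_p/(B\alpha\beta M)>0$. Since $E'$ is continuous on $(0,\infty)$, the intermediate value theorem gives a root $N^{\star}$; from the displayed expression for $E''(N)$ — a ratio of strictly positive quantities for $N>0$ — the map $E'$ is strictly increasing, hence the root is unique and is the unique global minimizer of $E$ on $(0,\infty)$. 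Strict monotonicity of $E'$ is also precisely what makes a bisection search well posed: any $a>0$ with $E'(a)<0$ (available from the $N\to 0^+$ limit) and any $b$ with $E'(b)>0$ (available from the $N\to\infty$ limit, e.g.\ one may test $b=M$) bracket $N^{\star}$, and repeated halving of $[a,b]$ converges to it.

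For the integer statement, strict convexity implies $E$ is strictly decreasing on $(0,N^{\star}]$ and strictly increasing on $[N^{\star},\infty)$. Writing $\mathcal{S}=\{N\in\{1,\dots,M\}:M/N\in\mathbb{Z}\}$ for the feasible set, and letting $N_-=\max\{N\in\mathcal{S}: N\le N^{\star}\}$ and $N_+=\min\{N\in\mathcal{S}: N\ge N^{\star}\}$ whenever these exist, monotonicity forces $E(N)\ge E(N_-)$ for all feasible $N\le N^{\star}$ and $E(N)\ge E(N_+)$ for all feasible $N\ge N^{\star}$, so the optimal feasible $N$ is whichever of $E(N_-),E(N_+)$ is smaller. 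In the degenerate cases $N^{\star}<1$ or $N^{\star}>M$ only one of $N_-,N_+$ exists, namely $N=1$ or $N=M$ respectively — both lie in $\mathcal{S}$ — and the recipe reduces to selecting that endpoint.

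The main obstacle is justifying the convexity claim for the \emph{exact} average SNR in \eqref{eq:average-SNR} rather than only for the Jensen lower bound underlying \eqref{eq:first-derivative}. Since \eqref{eq:average-SNR} differs from $\overline{\mathrm{SNR}}_{\rm av,low}$ only by a term independent of $N$, the data-energy term in \eqref{eq:exact} keeps the form $\mathrm{const}/\!\big(a(\sqrt{\rho}+\sqrt{\alpha\beta M N})^{2}+c\big)$ with $a,c>0$, and a direct differentiation shows this is convex in $N$; the pilot-energy term is linear in $N$, so $E''(N)>0$ persists. Everything else is routine calculus together with the standard fact that a strictly convex function on an interval attains its minimum over a finite subset at a point adjacent to the continuous minimizer.
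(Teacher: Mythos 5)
Your proposal is correct and follows essentially the same route as the paper's proof: the first term of $E'(N)$ is a positive constant while the second is negative, decreasing in magnitude from infinity to zero, giving a unique sign change locatable by bisection, and convexity reduces the integer-constrained problem to comparing the two adjacent feasible points. You simply fill in details the paper leaves implicit (the intermediate value theorem, the bracketing for bisection, the degenerate cases $N^{\star}<1$ or $N^{\star}>M$, and the verification that the exact expression $\mathrm{const}/(a(\sqrt{\rho}+\sqrt{\alpha\beta MN})^2+c)$ remains convex because the denominator is positive, increasing and concave in $N$).
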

\begin{proof}
The optimal $N^{\star}$ that minimizes $E(N)$ can be found by equating the derivative of $E(N)$ to zero, i.e., $E'(N)=0$. This corresponds to equating the first term of $E'(N)$ in \eqref{eq:first-derivative}, to the second term, which is decreasing with $N$. Moreover, the second term approaches infinity as $N$ goes to zero and approaches zero as $N \to \infty$. Hence, there is a unique $N^{\star}>0$ that satisfies $E'(N)=0$ which can be quickly found by a bisection search. 
Moreover, since $E(N)$ is a convex function, the optimal $N$ that is both integer and divides $M$ can be found by checking the objective function $E(N)$ for the closest smaller and larger $N$ values that are feasible.
Note that these claims are valid also for the exact expression of $E(N)$. 
\end{proof}

As a general rule, once the UE transmits data, the configured RIS improves the $\mathrm{SNR}$ in the data transmission but the actual increase depends on the strength of the different channel paths. A strong direct channel gain $\rho$ means a high value of $\gamma_d$ is possible for similar transmission powers for the pilot and data signals.
This phenomenon is illustrated in Fig.~\ref{fig:pilotvsdata}, which shows the data transmission power versus pilot transmission power needed to reach the thresholds for different channel conditions. The parameters are $\alpha=-80$\,dB, $\beta=-50$\,dB, $\gamma_p=20$\,dB, $M=1024$ RIS elements, and $L=200$ data symbols. The subarray size increases from $N=1$ to $N=1024$ as we move to the right. The point corresponding to the optimal viable energy-efficient subarray size is encircled on each curve.

We consider two scenarios in Fig.~\ref{fig:E_weak}: one in which the UE has a strong direct path and one where the direct path is weak. The cascaded path between the BS and the UE via the RIS is the same in both cases. We analyze $E(N)$ for varying payload sizes as well. Fig.~\ref{fig:E_weak} shows that 
$N^{\star}$ 
increases as the number of data symbols $L$ to transmit increases. $N^{\star}$ is also increased when the direct channel gain $\rho$ becomes smaller.


\begin{figure}[t!]

\begin{overpic}[width=1\columnwidth, trim={0 0 0 5mm}]{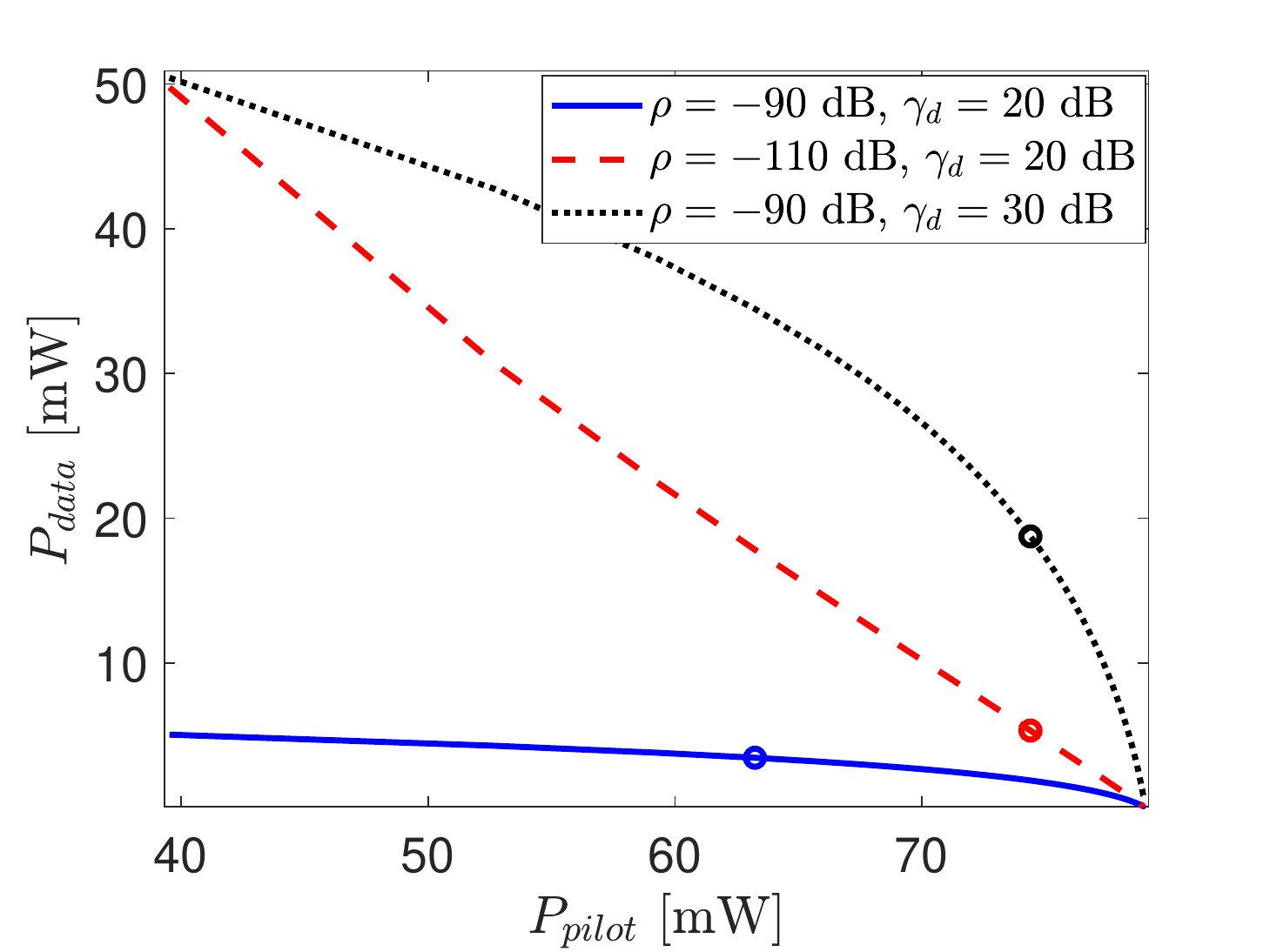} 
\put(59.5,21){\vector(0,-1){4}}
\put(54,21.5){\footnotesize $N=4$}
\put(81.2,23){\vector(0,-1){4}}
\put(81.2,27){\vector(0,1){4}}
\put(74,24){\footnotesize $N=16$}
\put(17.8,22){\vector(-1,-1){4}}
\put(16,23){\footnotesize $N=1$}
\end{overpic}
\vspace{-6mm}
\caption{Data transmission power versus pilot transmission power for varying direct path gain $\rho$ and data $\mathrm{SNR}$ target $\gamma_d$. The optimal viable energy-efficient subarray size is encircled.} 
\label{fig:pilotvsdata}
\end{figure}


\subsection{Insights from the SNR lower bound}

When studying the behaviour of the energy consumption one may use the lower bound since its behaviour is similar to the exact average $\mathrm{SNR}$ (as shown in Fig.~\ref{fig:ESNR}) but the expression is simpler. The minimum to $E(N)$ is found either at $N=1$, $N=M$, or an interior point. The existence of an interior point solution can be determined by evaluating the first derivative in \eqref{eq:first-derivative}. There are three distinct cases for a convex function:

\begin{enumerate}
\item If $E'(1)<0$ and $E'(M)>0,$ the viable optimum is an interior point. This point can be quickly found utilizing the bisection search method.
\item If $E'(1)<0$ and $E'(M)<0$. The optimal solution is individually controlled RIS elements, i.e., $N=M$.
\item If $E'(1)>0$ and $E'(M)>0$. The optimal solution is to bundle all RIS elements into one subarray, i.e., $N=1$.
\end{enumerate}



We can further analyze when these cases occur.
The condition $E'(1)<0$ for the lower bound is equivalent to
\begin{equation}
    \sqrt{\frac{\rho}{\alpha \beta M}}< \left( \left(\frac{4\gamma_d L }{\pi \gamma_p}\right)^{1/3}-1
    \right).
    \label{eq:ineq1}
\end{equation}
The (exact) energy consumption $E(N)$ from (\ref{eq:exact}) is seen in Fig.~\ref{fig:E_weak}. For a strong direct path $\rho=-90$\,dB and small payload of $L=200$ bits, the inequality (\ref{eq:ineq1}) does not hold. 



\begin{figure}[t!]
\centerline{\includegraphics[width=1\columnwidth]{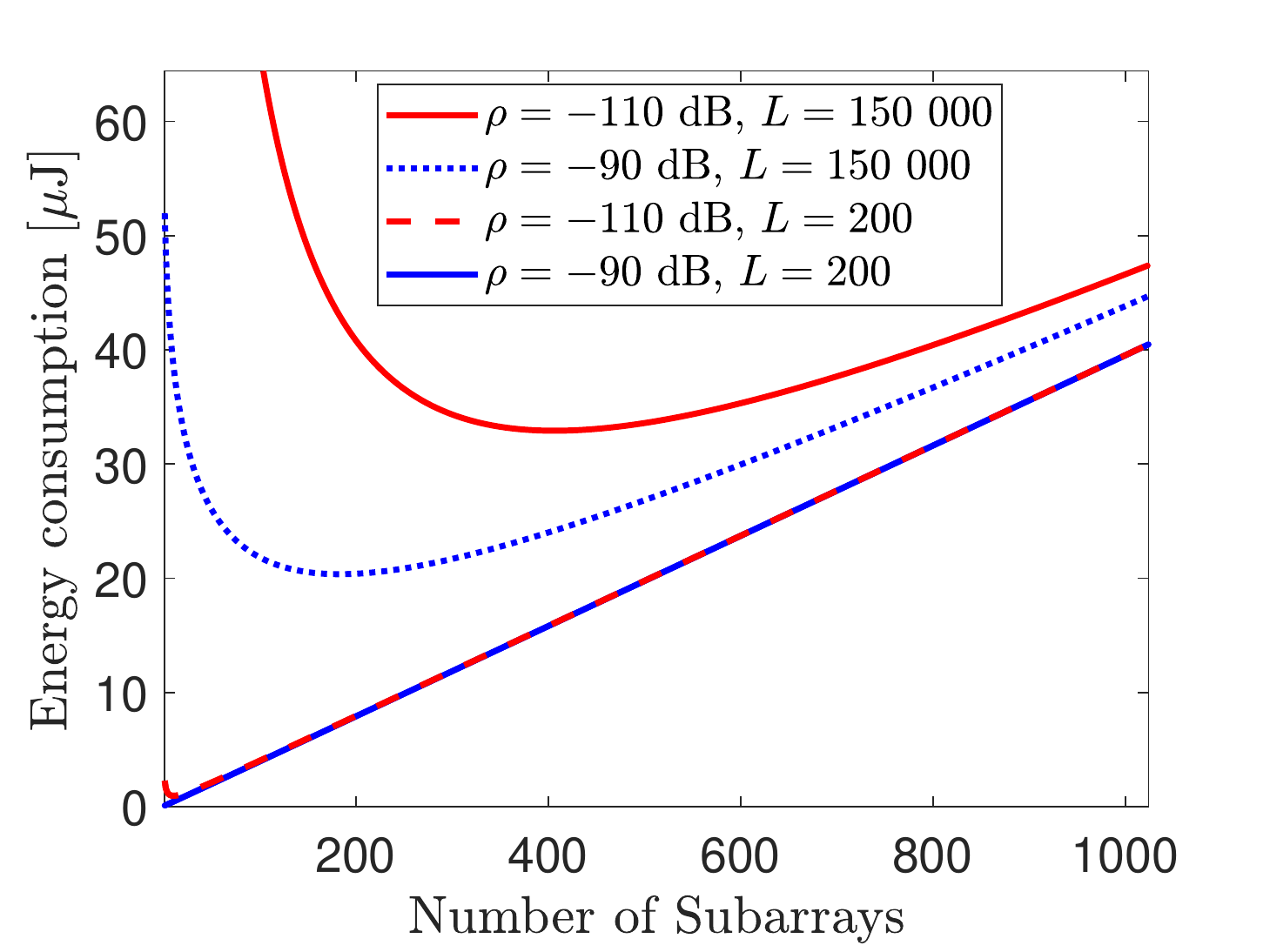}}
\caption{Energy consumption $E(N)$ for $\alpha=-80$\,dB, $\beta=-60$\,dB, $M=1024$ RIS elements when transmitting $L$ symbols over a channel with varying direct pathloss.  $N^{\star}$ corresponds to the minimum point on each curve.}  \label{fig:E_weak}
\end{figure}

The condition $E'(M)>0$ for the lower bound can be shown to be equivalent to
\begin{equation}
   \sqrt{\frac{\rho}{\alpha \beta M}}> \left( \left(\frac{4\gamma_d L}{\pi\gamma_p\sqrt{M} }\right)^{1/3}-\sqrt{M}
    \right).
    \label{eq:ineq2}
\end{equation}

The term $\sqrt{\frac{\rho}{\alpha \beta M}}$ represents the average quotient between the direct path and the reflected paths signal strength. If it is large, then (\ref{eq:ineq1}) won't hold, because the direct path will be efficient enough so that the RIS elements will not aid the communication sufficiently to warrant the increased energy cost of transmitting the RIS pilots, leading to employing only a single subarray as being the most energy-efficient choice.

In contrast, if $\sqrt{\frac{\rho}{\alpha \beta M}}$ is very small, the direct path has a negligible contribution to the received $\mathrm{SNR}$. If the inequality in \eqref{eq:ineq2} does not hold, it means that the system is starved for received signal power to the point that the most energy-efficient setup uses all the RIS elements steered individually. Note that the right-hand side of (\ref{eq:ineq2}) can attain negative values and that these inequalities were derived using the derivative in (\ref{eq:first-derivative}) for the lower bound expression. 

\section{Conclusion}
Our study showed that the pilot overhead is a limiting factor when minimizing the energy consumption for transmitting a finite-sized data packet over an RIS-assisted SISO (single-input single-output) channel. However, rather than turning off RIS elements to reduce the dimensionality, we have shown that it is preferable to group the RIS elements into subarrays with a common reflection coefficient because we can then maintain the aperture gain of the RIS (i.e., how much signal energy it collects), even if the beamforming gain is reduced.
We showed that the proposed dynamic grouping of elements into subarrays can greatly reduce the energy consumption compared to the baseline where each RIS element is configured individually. The optimal subarray configuration for a given UE depends on the relative strength of the channel to the RIS compared to the direct path and the payload size. Small payloads and a relatively weak path to the RIS favored large subarrays to reduce the pilot overhead, while the opposite was true for large payloads and strong paths to the RIS. Furthermore, the power that was allocated to the pilot sequence also affected both the energy consumption of the UE as well as the achievable date rate. If the RIS switches between aiding different UEs, the optimal subarray size will also change between UEs.

Since the subarray approach provides decisive energy reductions, it will be interesting to apply it to more advanced setups that incorporate multiple antennas at the BS and UE, as well as multiple RISs.
One could also minimize metrics that involve the energy consumption at the RIS and BS.

\bibliographystyle{IEEEtran}

\bibliography{Referenser}

\end{document}